\documentclass[11pt,a4paper]{article}
\usepackage[utf8]{inputenc}
\usepackage[T1]{fontenc}
\usepackage[english]{babel}
\usepackage{lmodern}
\usepackage{microtype}
\usepackage[a4paper,top=2.5cm,bottom=2.5cm,left=2.7cm,right=2.7cm,footskip=1.2cm]{geometry}
\usepackage{amsmath,amssymb,amsthm,mathtools,bm}
\usepackage{graphicx,booktabs,array,tabularx,multirow,makecell,longtable}
\usepackage{caption,subcaption}
\usepackage{xcolor,enumitem,csquotes}
\usepackage{tikz}
\usetikzlibrary{arrows.meta,positioning,fit,calc,shapes.geometric,decorations.pathreplacing}
\usepackage{siunitx}
\newcolumntype{L}[1]{>{\raggedright\arraybackslash}p{#1}}
\newcolumntype{Y}{>{\raggedright\arraybackslash}X}
\usepackage[backend=biber,style=authoryear-comp,natbib=true,url=true,doi=true,isbn=false,maxbibnames=99,maxcitenames=2,uniquelist=false,sorting=nyt]{biblatex}
\usepackage[colorlinks=true,linkcolor=black,citecolor=black,urlcolor=black]{hyperref}
\usepackage[noabbrev,capitalize]{cleveref}
\theoremstyle{plain}
\newtheorem{proposition}{Proposition}

\newtheorem{corollary}[proposition]{Corollary}
\theoremstyle{definition}
\newtheorem{definition}{Definition}
\newtheorem{assumption}{Assumption}

\theoremstyle{remark}
\newtheorem*{remark}{Remark}
\crefname{assumption}{Assumption}{Assumptions}
\Crefname{assumption}{Assumption}{Assumptions}
\crefname{definition}{Definition}{Definitions}
\Crefname{definition}{Definition}{Definitions}
\crefname{criterion}{Criterion}{Criteria}
\Crefname{criterion}{Criterion}{Criteria}
\setlist{nosep,leftmargin=*}
\definecolor{seriesblue}{RGB}{31,78,121}
\definecolor{seriesgray}{RGB}{242,244,247}
\tikzset{seriesbox/.style={draw=black!70,rounded corners=2pt,align=center,inner sep=5pt,minimum height=9mm,fill=seriesgray},seriesarrow/.style={-{Latex[length=2.2mm]},thick,draw=black!75}}

\title{Resolution-Aware Perpetual Futures on Binary Prediction Markets:\\
Failure Modes and Mechanical Stress Tests Using Polymarket Data}
\author{Maksym Nechepurenko\thanks{Founder and Director of Research, ForesightFlow, the Research Department of Devnull FZCO, Dubai, United Arab Emirates. Email: \texttt{maksym@devnull.ae}.}}
\date{July 19, 2026}

\begin{document}
\maketitle
\begin{abstract}
We study whether crypto-style perpetual-futures mechanics can be carried over to a binary event claim that ultimately pays either zero or one. The paper is not a product proposal. It identifies non-portability conditions and applies a mechanical stress test to candidate controls on observed Polymarket paths.

The formal account identity is sharper than in the first version. A synthetic long entered at price $p_0$ with collateral equal to $xp_0/L$ has terminal equity $-xp_0(1-1/L)$ when the claim pays zero, provided the position survives to resolution without a top-up, close, or conversion. The result is independent of the pre-resolution mark: mark-to-market gains and the terminal jump telescope to the entry-to-terminal payoff. Hence any leverage $L>1$ leaves an adverse-outcome account shortfall under those conditions. A second result gives a funding trilemma. Basis-only funding vanishes for a fixed relative error near a boundary, whereas a uniformly relative-basis correction becomes unbounded and violates payer-solvency or participation constraints. Funding alone therefore cannot provide uniform boundary control.

The empirical layer uses a seven-day Polymarket window from Tuesday 21 April through Monday 27 April 2026. A stratified enrichment pipeline processes 61,087 candidate markets; 13,298 pass the stated adequacy gates. Two structural diagnostics pass, but three of five pre-specified materiality tests fail. Dynamic margin and leverage compression liquidate more observed paths than the static baseline, pooled drawdown falls by only 5.1 percent, and a staged halt lowers final-hour liquidations mechanically while leaving terminal account-shortfall incidence slightly worse. The contribution is the corrected non-portability analysis, a reusable mechanical-replay design, and a set of negative design lessons. The results do not establish equilibrium performance or deployment safety.
\end{abstract}

\noindent\textbf{Keywords:} prediction markets; perpetual futures; event contracts; terminal settlement; margin; funding; mechanical stress testing; Polymarket.

\noindent\textbf{JEL Classification:} G13, G14, G18, G24.

\clearpage
\section{Introduction}
\label{sec:intro}

A binary outcome claim is simple when traded as spot. A trader pays the premium, receives a claim that pays either zero or one, and cannot lose more than the amount paid. The balance sheet changes when the same directional exposure is represented by a leveraged synthetic contract. The claim still resolves at a finite event time, but the trader posts only a fraction of the adverse-outcome loss. If the exposure remains open until resolution, the missing collateral must be supplied by a top-up, a pre-finality close, a conversion into a funded claim, or a loss-bearing third party.

\subsection{A plain account example}

Suppose a trader takes a synthetic long position of $x=1{,}000$ units at an entry price $p_0=0.60$ with leverage $L=5$. The initial collateral is $xp_0/L=120$. If the pre-resolution mark rises to $0.90$, the account shows a mark-to-market gain of $300$ and equity of $420$. If the event then resolves NO, the final mark change is $-0.90$ per unit, or $-900$. Terminal equity is therefore $420-900=-480$. The same result follows directly from entry to terminal: initial collateral $120$ minus the adverse-outcome loss $600$ equals $-480$.

The pre-resolution mark affects when the risk is visible and whether a liquidation rule acts. It does not change the entry-to-terminal account identity. This distinction is central to the revision.

\subsection{Research question}

The paper asks a falsification question rather than an advocacy question:

\begin{quote}
Under what conditions do standard perpetual-futures controls fail when the reference claim has a finite event horizon and a discrete terminal payoff, and how do candidate resolution-aware controls behave mechanically on observed event-market paths?
\end{quote}

The question is narrower than whether leverage on event claims is desirable. A dated future, a fully funded spot claim, a financed physical claim, and a synthetic perpetual overlay are different balance sheets. The paper compares them explicitly in \cref{sec:instrument-choice}. The reason to study a perpetual overlay is not that it is the natural or preferred instrument. It is that perpetual-style margin and funding infrastructure is widely reused in digital-asset markets, making a naive transplant a plausible engineering choice whose failure modes should be understood before deployment.

\subsection{Main findings}

The revision has three central findings.

First, terminal solvency is governed by the total entry-to-terminal payoff, not by the final jump considered in isolation. \Cref{prop:terminal-equity} shows that, under a stated no-top-up/no-close condition, any leverage above one produces adverse-outcome account shortfall for a synthetic long carried to a zero payout. A symmetric result holds for a short carried to a one payout. The theorem is a balance-sheet identity, not an empirical frequency claim.

Second, the boundary problem cannot be solved by allowing funding to diverge. \Cref{prop:funding-trilemma} shows that a bounded, economically payable transfer cannot impose uniform pressure proportional to relative basis arbitrarily close to zero or one. An uncapped relative-basis correction restores the mathematical penalty only by violating payer-solvency or participation constraints. A feasible design must clip transfers, floor the denominator, restrict leverage/listing near the boundary, or accept that funding does not provide uniform convergence.

Third, the observed-path replay is predominantly negative. The data confirm large terminal jumps and boundary-region microstructure differences. The candidate control family nevertheless misses three of five pre-specified materiality floors. Dynamic margin and leverage compression pre-empt recoveries on many observed paths; the halt removes a liquidation window but does not remove the terminal balance-sheet loss; and the boundary-funding component has little incremental effect in the replayed sample.

\subsection{Contribution and interpretation}

The contribution is therefore not a validated engine. It is:

\begin{enumerate}[label=(\roman*)]
  \item a corrected account-level non-portability result;
  \item an economically constrained boundary-funding result;
  \item a large-scale event-market microstructure characterization for derivative design;
  \item a reusable mechanical-replay protocol with pre-specified thresholds and explicit sample gates; and
  \item negative evidence identifying which candidate controls do not solve the dominant loss channel on the observed paths.
\end{enumerate}

The replay keeps historical prices, depth, order placement, open interest proxies, and trader behavior fixed while changing engine rules. It is therefore a mechanical stress test. It does not identify an equilibrium in which traders, arbitrageurs, and liquidity suppliers respond to the presence of the derivative.

\subsection{Organization}

\Cref{sec:instrument-choice} distinguishes the instrument classes. \Cref{sec:account-mechanics} defines the account identity. \Cref{sec:data-design} describes the data and replay. \Cref{sec:nonportability} proves the two structural results. \Cref{sec:candidate-controls} defines the candidate controls and their economic limits. \Cref{sec:results} reports the empirical results. \Cref{sec:interpretation} states the design implications and limitations.

\section{Instrument Choice and Related Design Context}
\label{sec:instrument-choice}

Prediction-market prices aggregate information imperfectly and need not equal objective probabilities because risk preferences, wealth, fees, and market structure enter the price \citep{wolfers_zitzewitz_2004,manski_2006}. The present paper treats the event-market price as a traded reference value, not as truth.

Perpetual futures are non-expiring leveraged contracts whose mark is ordinarily tethered to a continuously tradable spot index through periodic funding \citep{he_manela_schwert_2024_perpetuals,dai_li_yang_2025_arbitrage_perpetuals,zhang_2026_funding_rate_mechanism}. A binary event claim is different: the reference market has a finite economically meaningful horizon and a terminal payoff. Calling the overlay ``perpetual'' therefore describes its pre-resolution trading and funding convention, not an ability to remain open after the event is final.

\subsection{Four distinct balance sheets}

\begin{table}[ht]
\centering
\caption{Instrument alternatives. A fixed maturity removes ambiguity about when a contract ends, but it does not by itself solve collateral insufficiency.}
\label{tab:instrument-comparison}
\begin{tabularx}{\textwidth}{L{3.1cm}YYYY}
\toprule
Structure & Asset or obligation & Maturity & Principal terminal risk & Role in this paper \\
\midrule
Spot outcome claim & Fully paid YES/NO token & Event resolution & Holder can lose the premium, but creates no derivative-account shortfall & Empirical reference market \\
Dated cash-settled future & Bilateral cash obligation & Fixed expiry & Leveraged adverse settlement can exceed posted collateral & Alternative with explicit maturity \\
Synthetic event-linked perpetual & Margin account referencing event price & Delist, settle, convert, or roll at event finality & Same adverse terminal account identity if exposure survives & Object stress-tested here \\
Physically backed financed claim & Real outcome token plus separate loan & Loan matures before or at claim finality & Credit risk moves to execution, collateral control, and lender recovery & Alternative architecture, outside empirical scope \\
\bottomrule
\end{tabularx}
\end{table}

A dated future is not automatically safe. If it is cash-settled and leveraged, the same terminal account identity applies at expiry. A fully funded spot claim avoids derivative bad debt because the premium is paid in full. A physically backed financed claim can preserve leverage before finality, but then the relevant object is a loan secured by actual outcome tokens; lender safety depends on control, execution, and repayment rather than on perpetual funding.

\subsection{What is specific to binary markets}

Two features should be separated.

The first is general to finite-horizon leveraged claims: a terminal payoff can exceed posted collateral, especially when trading closes before the account can be adjusted. Similar risk can arise for real-valued terminal variables, dated derivatives, and contracts with discontinuous cash settlement.

The second is specific to normalized probability-like underlyings. Prices lie in $[0,1]$, the adverse terminal loss has a transparent upper bound, and relative mark--index errors become singular near zero or one. The funding result in \cref{prop:funding-trilemma} relies on this boundary geometry. The collateral identity in \cref{prop:terminal-equity} relies more broadly on finite terminal settlement and under-collateralized exposure.

\subsection{Empirical context}

Polymarket is a central limit order book (CLOB) venue for collateralized event claims. Its high-frequency archive permits direct measurement of spreads, depth, price changes, and final event-market observations. Existing work documents the information-aggregation properties of prediction markets and, more recently, the microstructure of CLOB-based venues \citep{dubach_2026_polymarket_anatomy,tsang_yang_2026_polymarket_anatomy,rahman_alchami_clark_2025_sok_depms}. The present paper uses those observations to test derivative controls mechanically; it does not infer that an event-linked derivative would leave the spot venue unchanged.

\section{Contract and Account Mechanics}
\label{sec:account-mechanics}

\subsection{Event claim and synthetic exposure}

Let $Y\in\{0,1\}$ be the terminal payout of a YES claim at resolution time $\tau$. Let $p_t\in[0,1]$ be the reference event-market price before resolution and $q_t$ the derivative mark. A synthetic long of size $x>0$ entered at $q_0=p_0$ has terminal profit and loss
\begin{equation}
  \Pi_\tau=x(Y-p_0).
  \label{eq:terminal-pnl}
\end{equation}
This identity does not require the mark to track $p_t$ perfectly between entry and resolution.

Let $C_0$ be collateral posted at entry. Define premium-based leverage for the long by
\begin{equation}
  L=\frac{x p_0}{C_0}, \qquad C_0=\frac{x p_0}{L}.
  \label{eq:long-leverage}
\end{equation}
The account equity immediately after terminal settlement, before any insurance or socialized recovery, is
\begin{equation}
  E_\tau=C_0+x(Y-p_0).
  \label{eq:terminal-equity-general}
\end{equation}

\subsection{Pre-resolution mark-to-market and telescoping}

Suppose the account recognizes mark-to-market profit and loss at $q_{\tau^-}$. Immediately before resolution,
\begin{equation}
  E_{\tau^-}=C_0+x(q_{\tau^-}-p_0).
\end{equation}
The final incremental change is $x(Y-q_{\tau^-})$. Hence
\begin{equation}
  E_{\tau^-}+x(Y-q_{\tau^-})=C_0+x(Y-p_0)=E_\tau.
  \label{eq:telescoping-equity}
\end{equation}
The mark path affects liquidation timing, margin calls, and the amount of variation margin exchanged. It does not alter terminal equity when all transfers are accounted for.

\subsection{Account shortfall and exchange loss}

\begin{definition}[Account shortfall]
The terminal account shortfall is
\begin{equation}
  S_\tau=[-E_\tau]^+.
\end{equation}
\end{definition}

Account shortfall is not identical to exchange bad debt. Exchange loss depends on collateral recovery, liquidation proceeds, guarantees, insurance funds, and loss-sharing rules. The empirical replay's ``bad-debt frequency'' is therefore relabeled in this revision as terminal account-shortfall incidence unless the full recovery waterfall is explicitly modeled.

\subsection{Mechanical replay}

Let $X^{\mathrm{obs}}$ denote the observed spot-market path and $g_c$ a deterministic engine configuration. The replay output is
\begin{equation}
  Z_c=g_c(X^{\mathrm{obs}}).
  \label{eq:mechanical-replay}
\end{equation}
It does not estimate the endogenous path $X(c)$ that would arise if configuration $c$ were deployed. In particular, it does not model changes in liquidity supply, arbitrage, entry, exit, open interest, or strategic response.

\begin{figure}[ht]
\centering
\begin{tikzpicture}[node distance=12mm and 8mm]
  \node[seriesbox,text width=3.0cm] (obs) {Observed Polymarket path\\prices, depth, timestamps};
  \node[seriesbox,text width=3.0cm,right=of obs] (rule) {Deterministic engine rule\\margin, leverage, halt};
  \node[seriesbox,text width=3.0cm,right=of rule] (out) {Mechanical output\\breach, shortfall, drawdown};
  \draw[seriesarrow] (obs) -- (rule);
  \draw[seriesarrow] (rule) -- (out);
  \node[below=8mm of rule,align=center,text width=10cm] {Not identified: the counterfactual market path after traders and liquidity suppliers respond to the rule.};
\end{tikzpicture}
\caption{Interpretation of the replay. It is a deterministic stress test on fixed historical paths, not an equilibrium deployment simulation.}
\label{fig:replay-interpretation}
\end{figure}

\section{Data and Empirical Design}
\label{sec:data-design}

\subsection{Window and sources}

The empirical window contains 168 contiguous hours from Tuesday 21 April 2026 at 00:00 UTC through Monday 27 April 2026 at 23:59 UTC.

The primary source is the PMXT v2 archive of Polymarket CLOB events. Market metadata are joined from the Gamma interface and resolution information from the venue's oracle infrastructure. The raw archive contains approximately 13.69 billion events and 110,828 distinct observed market identifiers. The processing and sample stages are separated in \cref{tab:sample-progression}; detailed definitions appear in \cref{app:sample-accounting}.

\begin{table}[ht]
\centering
\caption{Sample progression. Counts refer to different stages and are not interchangeable denominators.}
\label{tab:sample-progression}
\begin{tabularx}{\textwidth}{L{3.7cm}rY}
\toprule
Stage & Count & Interpretation \\
\midrule
Raw observed market identifiers & 110,828 & Distinct identifiers appearing anywhere in the 168-hour archive \\
Stratified enrichment candidates & 61,087 & Candidate markets processed under the locked day-stratified selection rule \\
Usable analysis sample & 13,298 & Markets passing metadata, activity, lifetime, and resolution-data adequacy gates \\
E3 replay denominator & 13,115 & Archived resolution-zone replay records after E3-specific eligibility filters; treated as the run denominator, not as a second estimate of the raw population \\
\bottomrule
\end{tabularx}
\end{table}

The earlier manuscript also reported a 17.4 percent ``resolution ceiling'' against a different candidate denominator and then compared it with the E3 count. That comparison obscured rather than clarified the sample. The statistic is not used in this revision's claims. The stage-specific counts in \cref{tab:sample-progression} are the only denominators used in the main text.

\subsection{Class composition}

The usable sample contains 408 politics markets, 1,518 crypto markets, 6,794 sports markets, and 4,578 markets classified as other. Sports therefore represent
\begin{equation}
  \frac{6{,}794}{13{,}298}=51.1\% 
\end{equation}
of the full usable sample. The previously reported 77.9 percent is the sports share within the three named classes politics, crypto, and sports:
\begin{equation}
  \frac{6{,}794}{408+1{,}518+6{,}794}=77.9\%.
\end{equation}
The denominator is stated whenever the 77.9 percent figure is used. The sample is sports-heavy under either description, but the full-sample and three-class shares answer different questions.

\subsection{Pre-specified analysis plan}

The estimators, gates, and materiality thresholds were recorded in an internally version-locked analysis appendix before the E1--E3 runs. They were not registered with an independent timestamped registry. This revision therefore uses \emph{pre-specified} rather than \emph{pre-registered}. The original plan and all deviations remain in the reproducibility record.

\subsection{Experiments}

The empirical design has three layers.

\begin{description}
  \item[E1: structural diagnostics.] Boundary-region depth and terminal-jump magnitude are measured against pre-specified floors. Additional microstructure facts are descriptive.
  \item[E2: fixed-path engine replay.] Three engine configurations are applied to observed paths under a position-agnostic grid, a deterministic position grid, and a synthetic-trader robustness layer.
  \item[E3: resolution-zone mechanics.] Four resolution mechanics isolate leverage compression, an experimental boundary-funding correction, and a staged halt.
\end{description}

The synthetic-trader layer is not a behavioral equilibrium model. It aggregates positions under stated rules while leaving the underlying path unchanged. Adding an adaptive behavioral simulation after observing the replay results would create a new model, assumptions, and estimands; it is reserved for a separately specified study rather than used as an ex post repair.

\section{Structural Non-Portability}
\label{sec:nonportability}

\subsection{Terminal account identity}

\begin{assumption}[Open exposure at resolution]
\label{ass:open-at-resolution}
The synthetic position remains open through $\tau$; no additional collateral is posted after entry; no pre-resolution close, deleveraging, or conversion occurs; and no external guarantee is counted inside the trader account.
\end{assumption}

\begin{proposition}[Adverse-outcome terminal equity]
\label{prop:terminal-equity}
Under \cref{ass:open-at-resolution}, consider a synthetic long of $x>0$ units entered at price $p_0\in(0,1)$ with collateral $C_0=xp_0/L$. If the terminal payout is $Y=0$, then
\begin{equation}
  E_\tau=-x p_0\left(1-\frac{1}{L}\right),
  \qquad
  S_\tau=x p_0\left(1-\frac{1}{L}\right).
  \label{eq:adverse-long-equity}
\end{equation}
Consequently, terminal account equity is negative for every $L>1$. The result is independent of the pre-resolution mark $q_{\tau^-}$ and of the path taken to it.
\end{proposition}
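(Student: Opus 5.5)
The plan is to reduce the statement to the terminal-equity expression \eqref{eq:terminal-equity-general} and then substitute. First, I will use \cref{ass:open-at-resolution} to justify that the only cash flows into or out of the trader account between entry and resolution are the initial collateral $C_0$ and the profit and loss on the $x$ units. The assumption rules out top-ups, partial closes, deleveraging, conversion, and external guarantees counted inside the account. Hence equity just after settlement is $E_\tau=C_0+x(Y-p_0)$, exactly as in \eqref{eq:terminal-equity-general}, using the terminal PnL identity \eqref{eq:terminal-pnl}.

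Second, I will address path-independence explicitly via the telescoping identity \eqref{eq:telescoping-equity}. Suppose the account recognizes variation margin at an arbitrary finite sequence of marks $q_{t_0}=p_0,q_{t_1},\dots,q_{t_n}=q_{\tau^-}$, followed by the terminal jump to $Y$. The cumulative recognized PnL is then $\sum_k x(q_{t_{k+1}}-q_{t_k})+x(Y-q_{\tau^-})=x(Y-p_0)$. This sum telescopes, so neither $q_{\tau^-}$ nor the intermediate path appears in $E_\tau$. This is the step requiring the most care. The point is not algebraic difficulty. It is making sure the assumption genuinely excludes every transfer that would break the telescoping, such as withdrawn gains or funding payments. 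I would note that funding transfers are either zero under the assumption's accounting or would be separate flows not counted here. If needed, I would state that mark-to-market gains are retained in the account rather than withdrawn, which is implicit in "no close."

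Third, I will substitute $Y=0$ and $C_0=xp_0/L$ from \eqref{eq:long-leverage} to obtain $E_\tau=xp_0/L-xp_0=-xp_0(1-1/L)$. The shortfall follows as $S_\tau=[-E_\tau]^+=xp_0(1-1/L)$, since this quantity is nonnegative for $L\ge1$. Finally, for the sign claim: with $x>0$ and $p_0\in(0,1)$ we have $xp_0>0$, and $1-1/L>0$ exactly when $L>1$. Therefore $E_\tau<0$ and $S_\tau>0$ for every $L>1$, which completes the proof. The symmetric short case ($Y=1$) would follow the same steps with the leverage convention adapted, but it is not required here.
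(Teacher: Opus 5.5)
Your proposal is correct and matches the paper's proof: substitute $Y=0$ and $C_0=xp_0/L$ into \eqref{eq:terminal-equity-general}, and use the telescoping identity \eqref{eq:telescoping-equity} for independence from $q_{\tau^-}$. Your multi-step telescoping, the explicit exclusion of funding transfers and withdrawn gains, and the note that $S_\tau=xp_0(1-1/L)$ needs $L\ge 1$ (since $S_\tau=[-E_\tau]^+\ge 0$) are useful refinements that the paper leaves implicit.
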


\begin{proof}
By \cref{eq:terminal-equity-general}, $E_\tau=C_0+x(Y-p_0)$. Substituting $C_0=xp_0/L$ and $Y=0$ gives \cref{eq:adverse-long-equity}. Equation \eqref{eq:telescoping-equity} shows that any pre-resolution mark-to-market gain or loss and the final mark change telescope to the same entry-to-terminal payoff.
\end{proof}

\begin{corollary}[Short-side symmetry]
A synthetic short of $x>0$ YES units entered at $p_0$, collateralized by $x(1-p_0)/L$, has adverse-outcome shortfall $x(1-p_0)(1-1/L)$ when $Y=1$, under the corresponding open-exposure assumptions.
\end{corollary}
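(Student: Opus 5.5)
The plan is to mirror the proof of \cref{prop:terminal-equity} with the sign of the exposure reversed and the leverage base changed from the premium $p_0$ to the short's adverse range $1-p_0$. First I set up the short-side account. A synthetic short of $x>0$ YES units entered at $q_0=p_0$ has terminal profit and loss $\Pi_\tau=x(p_0-Y)$, the negative of \cref{eq:terminal-pnl}. The natural premium-based leverage for the short, analogous to \cref{eq:long-leverage}, measures collateral against the maximal adverse loss per unit $1-p_0$. Hence $C_0=x(1-p_0)/L$. The terminal equity before any insurance or socialized recovery is then
\begin{equation*}
  E_\tau=C_0+x(p_0-Y).
\end{equation*}

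The second step is the telescoping argument, which guarantees path independence. Under the corresponding open-exposure assumption (the analogue of \cref{ass:open-at-resolution} for the short), the pre-resolution equity is $E_{\tau^-}=C_0+x(p_0-q_{\tau^-})$. The final increment is $x(q_{\tau^-}-Y)$. Summing these gives $C_0+x(p_0-Y)$ exactly as in \cref{eq:telescoping-equity}, so the value of $q_{\tau^-}$ and the path taken to it drop out.

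The third step substitutes $Y=1$ and the collateral formula:
\begin{equation*}
  E_\tau=\frac{x(1-p_0)}{L}-x(1-p_0)=-x(1-p_0)\Bigl(1-\frac1L\Bigr).
\end{equation*}
With $p_0\in(0,1)$ and $L>1$ this is strictly negative. Therefore $S_\tau=[-E_\tau]^+=x(1-p_0)(1-1/L)$, which is the claimed shortfall.

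There is no genuine analytic obstacle here. The only point requiring care is definitional. The corollary's collateral $x(1-p_0)/L$ implicitly fixes the short's leverage convention, and the ``corresponding open-exposure assumptions'' must be read as \cref{ass:open-at-resolution} applied verbatim to the short position. I would state both explicitly before the computation, so that the sign of the profit and loss and the denominator $1-p_0$ are unambiguous.
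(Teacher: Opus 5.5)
Your proposal is correct and matches the paper's argument: the paper gives the corollary no separate proof and treats it as the mirror image of \cref{prop:terminal-equity}, namely substituting $Y=1$ and $C_0=x(1-p_0)/L$ into the short-side analogue $E_\tau=C_0+x(p_0-Y)$ of \cref{eq:terminal-equity-general}, with the telescoping identity \cref{eq:telescoping-equity} giving path independence. Your restriction to $L>1$ is the right reading (the formula actually holds for $L\geq 1$), since for $L<1$ one has $S_\tau=[-E_\tau]^+=0$ rather than the stated expression; the long-side proposition carries the same implicit caveat.
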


\begin{remark}
The proposition does not say that every leveraged position must produce exchange loss. A sufficiently early liquidation, a top-up, a physically funded conversion, an offsetting hedge, or an external backstop can prevent or absorb the account shortfall. The result says that a position carried open to the adverse binary payoff cannot be made solvent by a backward-looking continuous-volatility estimate alone.
\end{remark}

\subsection{Implications for margin}

A continuous-volatility margin term may estimate ordinary path variation. It cannot alter the worst-case terminal account identity. To admit $L>1$ while avoiding the shortfall in \cref{prop:terminal-equity}, the mechanism must enforce at least one of the following before resolution:

\begin{enumerate}[label=(\alph*)]
  \item collateral top-ups that move effective leverage toward one;
  \item position reduction or full close;
  \item conversion into fully funded outcome claims;
  \item an enforceable external guarantee or loss-bearing capital layer; or
  \item a contract payoff that does not preserve the adverse binary exposure through resolution.
\end{enumerate}

A dated future has the same issue at expiry if it remains under-collateralized. The underlying problem is finite terminal settlement, not the label ``perpetual.'' Binary support makes the adverse loss exact and common; it is not the only possible setting in which terminal collateral insufficiency arises.

\subsection{Boundary funding trilemma}

Let $I\in(0,1)$ be the event index and $q\in[0,1]$ the derivative mark. Let $f(q,I)$ be the per-unit funding transfer over one funding interval, positive from longs to shorts under a fixed sign convention.

\begin{assumption}[Funding feasibility]
\label{ass:funding-feasibility}
Funding is bounded by a finite transfer cap $\bar f$ determined by collateral, payer solvency, and participation constraints:
\begin{equation}
  |f(q,I)|\leq \bar f <\infty
  \quad\text{for all }(q,I)\in[0,1]\times(0,1).
\end{equation}
\end{assumption}

\begin{proposition}[Boundary funding trilemma]
\label{prop:funding-trilemma}
No funding rule satisfying \cref{ass:funding-feasibility} can impose a uniform lower bound proportional to the relative basis
\begin{equation}
  |f(q,I)|\geq \kappa\frac{|q-I|}{\min\{I,1-I\}}
  \label{eq:relative-pressure}
\end{equation}
for all $q\in[0,1]$, $I\in(0,1)$, and any fixed $\kappa>0$. Conversely, a basis-only rule $f=c(q-I)$ becomes arbitrarily small for a fixed relative deviation as $I\downarrow0$ or $I\uparrow1$.
\end{proposition}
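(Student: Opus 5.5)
The plan is to argue by contradiction for the first claim and by direct computation for the converse.

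For the impossibility part, suppose a rule $f$ satisfies \cref{ass:funding-feasibility} with cap $\bar f<\infty$ and also satisfies \eqref{eq:relative-pressure} for some fixed $\kappa>0$ at every $(q,I)\in[0,1]\times(0,1)$. I will pick a family of points where the right-hand side of \eqref{eq:relative-pressure} blows up. The natural choice is $q=0$ together with $I=\varepsilon\in(0,1/2)$. Then $\min\{I,1-I\}=\varepsilon$ and $|q-I|=\varepsilon$, so the relative basis equals one. That choice gives only the bound $\kappa$, which is not enough. I will therefore instead fix $q=1/2$ and let $I=\varepsilon\downarrow0$. The right-hand side becomes $\kappa(1/2-\varepsilon)/\varepsilon$, which tends to $+\infty$. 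Choosing $\varepsilon$ small enough that $\kappa(1/2-\varepsilon)/\varepsilon>\bar f$, for instance $\varepsilon<\kappa/(2(\bar f+\kappa))$, gives $|f(1/2,\varepsilon)|>\bar f$. This contradicts the cap. The symmetric choice $I=1-\varepsilon$ handles the upper boundary as well, though one boundary suffices. The only care needed is that the quantifier ranges over all $q\in[0,1]$, so the mark may sit at an $O(1)$ distance from a near-boundary index. The contradiction therefore uses the feasibility of the pair $(q,I)$ and not any particular dynamics.

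For the converse, consider a basis-only rule $f=c(q-I)$ and fix a relative deviation $\delta>0$. I will parametrize marks by $q=I(1+\delta)$ near the lower boundary, or $1-q=(1-I)(1+\delta)$ near the upper boundary, so that $|q-I|/\min\{I,1-I\}=\delta$. Then $|f|=|c|\delta I\to0$ as $I\downarrow0$, and symmetrically $|f|=|c|\delta(1-I)\to0$ as $I\uparrow1$. The pressure per unit of relative mispricing therefore vanishes, even though the relative error is held fixed. I will also note that $q$ stays in $[0,1]$ for $I$ small enough, which justifies the parametrization.

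The main obstacle is interpretive rather than computational. I need to state clearly that the two halves together form the ``trilemma'': either the rule is bounded, and then by the first part it cannot exert uniform relative pressure; or it exerts such pressure, and then it is unbounded and violates \cref{ass:funding-feasibility}, that is, payer solvency or participation; or it is basis-only, and then by the second part its pressure vanishes near the boundary. I will close by remarking that the escape routes listed after the statement in the introduction (clipping, denominator floors, leverage restrictions) correspond exactly to weakening \eqref{eq:relative-pressure} on a neighborhood of $I\in\{0,1\}$.
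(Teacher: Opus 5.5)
Your proposal is correct and follows essentially the paper's proof: fix a mark at positive distance from the boundary (your $q=1/2$, the paper's arbitrary $q\in(0,1)$) and let $I\downarrow0$ so that the required pressure exceeds $\bar f$, then parametrize $q=(1+\rho)I$ to get $|c(q-I)|=|c|\rho I\to0$ at fixed relative deviation. Your explicit threshold $\varepsilon<\kappa/(2(\bar f+\kappa))$ and your explicit upper-boundary parametrization are small additions that the paper leaves implicit.
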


\begin{proof}
For the first statement, fix any $q\in(0,1)$ and let $I\downarrow0$. The right-hand side of \eqref{eq:relative-pressure} diverges, contradicting the finite cap $\bar f$. For the second statement, fix $\rho>0$ and set $q=(1+\rho)I$ for sufficiently small $I$. Then $|q-I|=\rho I$ and $|c(q-I)|=c\rho I\to0$, while the relative deviation remains $\rho$.
\end{proof}

The proposition identifies a design trilemma. A mechanism cannot simultaneously guarantee uniform relative-basis pressure at every boundary point, keep transfers bounded, and preserve participation of the paying side. A feasible rule must use a denominator floor, a transfer cap, boundary-specific leverage/listing restrictions, or some combination. The uncapped correction in the original paper satisfied the mathematical pressure objective by abandoning economic feasibility. It is not retained as a deployment recommendation.

\section{Candidate Resolution-Aware Controls}
\label{sec:candidate-controls}

The empirical replay evaluates a family of candidate controls. This section states their intended risk channels and their limitations. The family is not presented as a solved architecture.

\subsection{Robust reference index}

A composite index can combine midpoint, depth-weighted midpoint, and recent executed prices. Its purpose is to reduce sensitivity to any one noisy or manipulable input. It cannot create executable liquidity or determine the true probability of the event. The replay uses the existing archived implementation and does not claim optimal index weights.

\subsection{Margin decomposition}

A useful accounting decomposition is
\begin{equation}
  M_t^{\mathrm{maint}}
  =M_t^{\mathrm{path}}+M_t^{\mathrm{terminal}}+M_t^{\mathrm{execution}},
  \label{eq:margin-decomposition}
\end{equation}
where the terms cover ordinary path variation, adverse terminal settlement, and the cost of reducing exposure in available depth. The decomposition avoids asking one realized-volatility multiplier to perform three jobs.

The terminal term does not by itself solve \cref{prop:terminal-equity}: if it is large enough to cover the full adverse payoff, effective leverage approaches one. If it rises gradually, it can liquidate positions that would later recover. The empirical question is therefore not whether terminal risk exists, but whether the schedule improves exchange loss enough to justify pre-emption.

\subsection{Leverage compression}

A time- or state-dependent cap $L_{\max}(t)$ reduces new risk as resolution approaches. Calendar time is an imperfect trigger when event timing is uncertain. A production rule would preferably use observable state transitions, data quality, and executable depth in addition to the scheduled event time. The present replay uses the archived time-to-resolution schedule.

\subsection{Economically bounded funding}

An admissible boundary-aware diagnostic can be written as
\begin{equation}
  f_t=\operatorname{clip}_{[-\bar f_t,\bar f_t]}
  \left[
    \alpha_t(q_t-I_t)
    +\beta_t\mathbf{1}_{\mathcal B_t}
    \frac{q_t-I_t}{\max\{\varepsilon,\min(I_t,1-I_t)\}}
  \right],
  \label{eq:capped-funding}
\end{equation}
where $\varepsilon>0$ floors the denominator and $\bar f_t$ enforces collateral and participation limits. A further account constraint can require that the interval transfer not exceed a stated fraction of payer equity.

Equation \eqref{eq:capped-funding} is a feasible design family, not the rule evaluated by the archived replay. The replay used an uncapped experimental correction. The correction had little incremental effect on the observed E3 paths, and the revision does not infer performance for the capped family without a new run.

\subsection{Resolution-zone halt}

A staged halt or reduce-only transition can prevent new positions and in-flight liquidations after a cutoff. It addresses an execution window. It does not change the terminal payoff of positions that remain open. The halt should therefore be evaluated separately from terminal account shortfall.

\subsection{Risk-channel map}

\begin{table}[ht]
\centering
\caption{Candidate controls and the risk channels they can and cannot address.}
\label{tab:risk-channel-map}
\small
\begin{tabularx}{\textwidth}{L{2.8cm}YYY}
\toprule
Control & Primary channel & What it may improve & What it cannot establish \\
\midrule
Composite index & Index measurement & More robust mark input & Executable depth or terminal solvency \\
Jump-aware margin & Account collateral & Earlier recognition of terminal exposure & Low-pre-emption calibration or equilibrium demand \\
Leverage compression & Exposure admission & Smaller positions near finality & Settlement of already open under-collateralized positions \\
Capped boundary funding & Mark--index carry & Penalizes some relative deviations & Uniform boundary convergence under bounded transfers \\
Staged halt & Execution timing & Removes a late trading/liquidation window & Adverse terminal payoff or account shortfall \\
\bottomrule
\end{tabularx}
\end{table}

A separate listing screen based on data quality, rule clarity, and liquidity can exclude markets for which no control is credible. The screen is retained in the technical supplement but is not a headline contribution of this revision.

\section{Empirical Results}
\label{sec:results}

\subsection{Structural diagnostics}

Two pre-specified diagnostics pass.

\paragraph{Boundary depth.} The pooled median boundary-depth ratio is 1.72 against a floor of 1.5. The class medians reported by the archived run are 1.69 for crypto, 1.75 for politics, 1.74 for sports, and 1.70 for other.

\paragraph{Terminal jump.} Among markets with usable final-hour observations, the median absolute final-hour move is 0.50 against a floor of 0.10. A separate cohort has no observable final-hour price or book update, which is treated as final-hour illiquidity rather than imputed price continuity.

These results establish that the candidate engine is being tested against material boundary and terminal phenomena. They do not establish that the proposed controls solve them.

\subsection{E2: fixed-path engine replay}

The configurations are:

\begin{description}
  \item[$\mathsf{C}_0$] static continuous-volatility margin, basis-only funding, fixed leverage cap;
  \item[$\mathsf{C}_1$] jump-aware margin and leverage compression;
  \item[$\mathsf{C}_2$] $\mathsf{C}_1$ plus the experimental funding correction and staged resolution protocol.
\end{description}

\begin{table}[ht]
\centering
\caption{Maintenance-path pass rate: fraction of market--side grid cells that do not breach maintenance margin on the observed path.}
\label{tab:e2-survival-revised}
\begin{tabular}{lccccc}
\toprule
Configuration & $L=1$ & $L=2$ & $L=3$ & $L=5$ & $L=10$ \\
\midrule
$\mathsf{C}_0$ & 33.5\% & 31.1\% & 30.1\% & 28.8\% & 27.3\% \\
$\mathsf{C}_1$ & 28.7\% & 27.1\% & 26.5\% & 25.7\% & 24.6\% \\
$\mathsf{C}_2$ & 27.9\% & 26.3\% & 25.5\% & 24.5\% & 23.2\% \\
\bottomrule
\end{tabular}
\end{table}

At $L=5$, $\mathsf{C}_2$ has a lower maintenance-path pass rate and approximately 6 percent higher liquidation incidence than $\mathsf{C}_0$, rather than the pre-specified 30 percent reduction. The result is a fixed-path pre-emption effect: the dynamic rules trigger before some paths recover. It is not evidence that the same traders would have entered or maintained the same positions under $C_2$.

\begin{table}[ht]
\centering
\caption{Hypothetical insurance-pool drawdown on the deterministic position grid. Values are archived run outputs, not forecasts of a deployed exchange.}
\label{tab:e2-drawdown-revised}
\begin{tabular}{lrrrr}
\toprule
Class & $\mathsf{C}_0$ & $\mathsf{C}_1$ & $\mathsf{C}_2$ & $\mathsf{C}_2$ versus $\mathsf{C}_0$ \\
\midrule
Crypto   & 68.3 M  & 60.8 M  & 60.7 M  & $-11.1\%$ \\
Politics & 7.2 M   & 7.0 M   & 7.0 M   & $-2.8\%$ \\
Sports   & 268.6 M & 255.3 M & 254.9 M & $-5.1\%$ \\
Other    & 201.0 M & 198.3 M & 194.4 M & $-3.3\%$ \\
\midrule
All      & 545.1 M & 521.3 M & 517.1 M & $-5.1\%$ \\
\bottomrule
\end{tabular}
\end{table}

The pooled reduction is 5.1 percent against a pre-specified 50 percent floor. The crypto class has the largest reduction, but the single-week and class-composition limits preclude a general class ranking.

The synthetic-trader robustness layer improves median profit and loss from $-70$ to $-60$ but worsens the mean from $-122$ to $-144$. These are model-dependent outputs, not welfare estimates. The median-direction test passes; the controlling E2 materiality tests fail.

\subsection{E3: resolution-zone mechanics}

The mechanics are $M_0$ forced expiry, $M_1$ leverage compression, $M_2$ compression plus the experimental funding correction, and $M_3$ the full staged halt.

\begin{table}[ht]
\centering
\caption{Final-hour liquidation incidence by mechanic. The reduction under $M_3$ is largely a consequence of removing the trading window.}
\label{tab:e3-liquidation-revised}
\begin{tabular}{lcccc}
\toprule
Mechanic & $L=2$ & $L=3$ & $L=5$ & $L=10$ \\
\midrule
$M_0$ & 0.115\% & 0.119\% & 0.107\% & 0.128\% \\
$M_1$ & 0.128\% & 0.128\% & 0.115\% & 0.123\% \\
$M_2$ & 0.128\% & 0.128\% & 0.115\% & 0.123\% \\
$M_3$ & 0.025\% & 0.025\% & 0.021\% & 0.021\% \\
\bottomrule
\end{tabular}
\end{table}

\begin{table}[ht]
\centering
\caption{Terminal account-shortfall incidence on the E3 grid. The original paper called this bad-debt frequency; exchange bad debt would require a full recovery waterfall.}
\label{tab:e3-shortfall-revised}
\begin{tabular}{lcccc}
\toprule
Mechanic & $L=2$ & $L=3$ & $L=5$ & $L=10$ \\
\midrule
$M_0$ & 51.7\% & 53.3\% & 54.8\% & 56.4\% \\
$M_1$ & 51.6\% & 53.3\% & 54.8\% & 56.4\% \\
$M_2$ & 52.5\% & 54.0\% & 55.6\% & 57.5\% \\
$M_3$ & 53.0\% & 54.5\% & 56.0\% & 57.9\% \\
\bottomrule
\end{tabular}
\end{table}

$M_3$ lowers final-hour liquidation incidence by 80.4 percent relative to $M_0$, passing the 50 percent floor. The result is mechanical because the halt removes the interval in which those liquidations can occur. Terminal account-shortfall incidence rises by 2.4 percent rather than falling by the pre-specified 75 percent. The halt therefore changes the execution path without repairing the terminal balance sheet.

$M_1$ and $M_2$ are nearly identical in final-hour liquidation incidence. The experimental boundary-funding component did not contribute materially on these paths, so the replay provides no evidence that an uncapped or capped boundary correction is economically effective.

\subsection{Test summary}

\begin{table}[ht]
\centering
\caption{Pre-specified tests. Structural diagnostics are separated from the five E2/E3 materiality tests.}
\label{tab:test-summary-revised}
\begin{tabularx}{\textwidth}{L{3.7cm}L{3.6cm}YL{1.4cm}}
\toprule
Test & Pre-specified criterion & Archived result & Status \\
\midrule
Boundary-depth diagnostic & median $\geq1.5$ & 1.72 pooled & Pass \\
Terminal-jump diagnostic & median $\geq0.10$ & 0.50 pooled & Pass \\
E2 liquidation incidence & $\geq30\%$ reduction at $L=5$ & approximately 6\% worse & Fail \\
E2 grid drawdown & $\geq50\%$ pooled reduction & 5.1\% reduction & Fail \\
E2 synthetic median P\&L & non-negative direction & 14\% median improvement & Pass$^{a}$ \\
E3 final-hour liquidations & $\geq50\%$ reduction & 80.4\% reduction, by halt construction & Pass$^{b}$ \\
E3 terminal shortfall & $\geq75\%$ reduction & 2.4\% worse & Fail \\
\bottomrule
\end{tabularx}
\begin{flushleft}\footnotesize
$^{a}$Synthetic robustness layer; not a controlling empirical test. $^{b}$Mechanically induced by removal of the final-hour trading window.
\end{flushleft}
\end{table}

Three of the five E2/E3 materiality tests fail. The evidence supports non-portability and negative design lessons, not validation of the candidate control family.

\section{Interpretation, Design Implications, and Limitations}
\label{sec:interpretation}

\subsection{What the replay identifies}

The fixed-path results distinguish three channels.

\paragraph{Terminal balance-sheet channel.} If under-collateralized exposure survives to the adverse outcome, \cref{prop:terminal-equity} determines the account shortfall. A halt does not change this identity.

\paragraph{Pre-emption channel.} Dynamic margin and leverage compression can force exits before ordinary path recovery. The replay shows this channel on the observed paths. It does not prove that every calibration or every venue produces the same ordering.

\paragraph{Execution-window channel.} A halt can prevent in-flight liquidations after the halt time. The 80.4 percent E3 reduction measures this channel, but it is not evidence of lower terminal loss.

These channels should not be aggregated into one statement that the framework ``works'' or ``fails.'' Each control has a narrower object.

\subsection{Concrete design directions}

The negative results suggest four research directions.

\begin{enumerate}
  \item \textbf{Pre-finality deleveraging or physical conversion.} If the terminal account identity is the dominant loss channel, the strongest intervention is to extinguish or transform the synthetic obligation before finality, not merely to halt trading around it.
  \item \textbf{State-triggered rather than continuously reactive margin.} Margin changes tied to observable resolution states, rule clarifications, disputes, or liquidity failures may avoid some volatility-driven pre-emption. This is a hypothesis requiring a new replay, not a conclusion from the current one.
  \item \textbf{Bounded funding plus eligibility restrictions.} Funding transfers should be capped and subject to payer-equity limits. Boundary states in which the cap makes relative-basis control ineffective should trigger leverage reduction, reduce-only operation, or ineligibility.
  \item \textbf{Position and class calibration.} The deterministic grid uses positions far larger than typical fills in parts of the sample. Future tests should use observed position distributions where available and should report event-class results without treating one week as stable population calibration.
\end{enumerate}

These directions do not rescue the present framework ex post. They define separately testable mechanisms.

\subsection{Why no adaptive equilibrium simulation is added here}

A behavioral simulation could allow traders and liquidity suppliers to respond to margin changes and halts. Such a simulation would require assumptions about entry, leverage demand, market-making, arbitrage capital, order cancellation, and strategic behavior. The current dataset does not identify those response functions. Adding one after seeing the negative replay would replace a transparent fixed-path stress test with an author-specified model. We therefore state the equilibrium limitation directly and reserve adaptive-agent analysis for a separately specified experiment with its own hypotheses and sensitivity design.

\subsection{Data and external-validity limits}

The evidence covers one seven-day Polymarket window and is sports-heavy: sports account for 51.1 percent of the full usable sample and 77.9 percent of the three named classes. Results are platform- and window-specific. The 2024 election cycle is outside the archive window. Address-level trader positions and endogenous derivative open interest are unavailable. The replay treats the spot-market path as fixed. The synthetic-trader layer is a robustness exercise, not observed behavior.

The terminal-jump diagnostic is observed only where a usable final-hour quote exists. A substantial cohort becomes inactive before resolution. This missingness is economically meaningful, but it prevents unconditional estimation of the jump distribution across all resolved markets.

\subsection{Formal scope}

The terminal-equity proposition assumes no top-up, close, conversion, or guarantee before resolution. It is not a theorem that leverage is impossible. It identifies what any viable architecture must add.

The funding proposition assumes a finite payment cap. It does not show that every capped rule is ineffective; it shows that uniform relative-basis control arbitrarily close to the boundaries is incompatible with bounded transfers.

The risk-control definitions are mechanism candidates. The existing replay does not validate the capped funding family in \cref{eq:capped-funding}, state-triggered margin, physical conversion, or equilibrium deployment.

\subsection{Relationship to real-valued terminal variables}

The account identity extends to any terminal payoff $Y$ for which the adverse entry-to-terminal loss exceeds posted collateral. A real-valued prediction target can therefore reproduce the same solvency issue if the derivative has a finite settlement time, the payoff can move discontinuously, or the market closes before the position can be reduced. The binary case is distinctive because the payoff support is exactly known, adverse loss is easy to express, and the normalized index creates the boundary-funding trilemma. The broader lesson is about terminal maturity and market closure; the unit interval adds a separate geometry.

\section{Conclusion}
\label{sec:conclusion}

A crypto-style perpetual overlay does not remove the terminal economics of the event claim it references. If a synthetic position remains open through an adverse binary payoff, the total entry-to-terminal loss determines account solvency. The corrected account identity shows that, absent top-ups, close, conversion, or external support, any leverage above one leaves the adverse-outcome account under-collateralized. The pre-resolution mark changes when the problem appears, not the terminal amount.

Funding has a separate boundary limitation. Basis-only transfers become negligible for fixed relative errors near zero or one. A relative-basis correction can avoid that vanishing incentive only by growing without bound unless it is clipped or its denominator is floored. Once payer solvency and participation are imposed, funding cannot provide uniform boundary control by itself.

The Polymarket replay supports the structural concerns but not the candidate control family as a solution. Boundary depth and terminal jumps are economically material. Dynamic margin and leverage compression increase liquidation incidence on the observed paths; pooled hypothetical drawdown falls only modestly; and a staged halt removes final-hour liquidations while leaving terminal account-shortfall incidence slightly worse. Three of five pre-specified E2/E3 materiality tests fail.

The paper's positive contribution is therefore a negative-design result: it identifies which balance-sheet and mechanism assumptions make a naive transplant fail, supplies a reproducible fixed-path stress-test protocol, and narrows the set of credible follow-up designs. Future work should test pre-finality deleveraging, physically funded conversion, state-triggered margin, and bounded funding under explicit participation constraints. None is validated by the present replay.

\appendix
\section{Sample Accounting and Denominator Discipline}
\label{app:sample-accounting}

\subsection{Window}

The data window is 2026-04-21 00:00 UTC through 2026-04-27 23:59 UTC. These dates run from Tuesday through Monday. The window contains 168 hourly files.

\subsection{Stage definitions}

\begin{table}[ht]
\centering
\caption{Denominator registry for the revised paper.}
\label{tab:denominator-registry}
\begin{tabularx}{\textwidth}{L{3.2cm}rYY}
\toprule
Identifier & Count & Inclusion rule & Permitted use \\
\midrule
Raw archive markets & 110,828 & Any distinct market identifier observed in the window & Archive scale only \\
Stratified candidates & 61,087 & Candidate set selected and enriched under the locked day-stratified rule & Sample progression and join diagnostics \\
Usable analysis markets & 13,298 & Pass lifetime, activity, metadata, and resolution-data adequacy gates & E1 and E2 denominator \\
E3 replay records & 13,115 & Pass E3-specific resolution-zone filters in the archived run & E3 tables only \\
Three named classes & 8,720 & Politics, crypto, or sports among usable markets & Denominator for the 77.9\% sports-share trigger \\
Full usable class set & 13,298 & Named classes plus other & Denominator for the 51.1\% full-sample sports share \\
\bottomrule
\end{tabularx}
\end{table}

The revision does not use the prior 17.4 percent resolution-ceiling statistic. The source package did not contain the run-level key table needed to reconcile that statistic with the E3 count without re-executing the pipeline, and the statistic is not necessary for any main estimand.

\subsection{Class counts}

\begin{table}[ht]
\centering
\caption{Usable analysis sample by class.}
\label{tab:class-counts-revised}
\begin{tabular}{lrr}
\toprule
Class & Markets & Share of full usable sample \\
\midrule
Politics & 408 & 3.07\% \\
Crypto & 1,518 & 11.42\% \\
Sports & 6,794 & 51.09\% \\
Other & 4,578 & 34.43\% \\
\midrule
Total & 13,298 & 100.00\% \\
\bottomrule
\end{tabular}
\end{table}

The sports-share trigger was defined on politics, crypto, and sports only. Its denominator is therefore 8,720, giving 77.9 percent. This trigger is retained as an analysis-plan consequence rule, while the full-sample composition is reported separately.

\section{Pre-Specified Estimands and Thresholds}
\label{app:prespecified}

The analysis plan was frozen internally before the E1--E3 runs but was not placed in an independent timestamped registry. The correct description is \emph{pre-specified} or \emph{version-locked}, not externally pre-registered.

\subsection{Structural diagnostics}

\begin{enumerate}
  \item Boundary-depth ratio: pooled median at least 1.5.
  \item Absolute terminal jump over the final hour: pooled median at least 0.10 among markets with usable terminal observations.
\end{enumerate}

\subsection{E2 materiality tests}

\begin{enumerate}
  \item At $L=5$, the full configuration reduces liquidation incidence by at least 30 percent relative to the baseline.
  \item The full configuration reduces hypothetical insurance-pool drawdown by at least 50 percent on the deterministic position grid.
  \item The synthetic-trader median profit-and-loss direction is weakly better under the full configuration. This is a robustness test, not a controlling empirical test.
\end{enumerate}

\subsection{E3 materiality tests}

\begin{enumerate}
  \item The staged mechanic reduces final-hour liquidation incidence by at least 50 percent relative to forced expiry.
  \item The staged mechanic reduces terminal account-shortfall incidence by at least 75 percent.
\end{enumerate}

The original appendix used the term ``bad debt'' for the second metric. The revision uses terminal account shortfall because the replay does not model the full exchange recovery waterfall.

\section{Additional Empirical Detail}
\label{app:detailed-results}

\subsection{Descriptive microstructure facts}

The archived analysis reports the following additional descriptive facts. They are useful for mechanism design but are not independent hypothesis tests.

\begin{itemize}
  \item Median absolute basis is 0.013 in selected news windows and 0.037 in matched control windows under the fast index estimator.
  \item Median half-spread is approximately 0.0055 in the low boundary region, 0.2700 in the mid region, and 0.00525 in the high boundary region.
  \item Median/mean/p99 trade sizes are 20/713/10,000 USDC in politics, 16.5/301/2,500 in sports, 10/79/500 in crypto, and 7.31/43.5/500 in other.
  \item Final-24-hour activity relative to the earlier baseline is 24.6 times in crypto, 2.5 times in sports, 0.9 times in other, and 0.68 times in politics.
\end{itemize}

These values are sample- and estimator-specific. They should not be treated as stable venue parameters without replication.

\subsection{Synthetic-trader robustness output}

\begin{table}[ht]
\centering
\caption{Synthetic-trader profit-and-loss output retained as robustness evidence only.}
\label{tab:e2c-appendix}
\begin{tabular}{lrrr}
\toprule
Configuration & p10 & Median & Mean \\
\midrule
$\mathsf{C}_0$ & $-805$ & $-70$ & $-122$ \\
$\mathsf{C}_2$ & $-800$ & $-60$ & $-144$ \\
\bottomrule
\end{tabular}
\end{table}

The median improves while the mean worsens. The output cannot be interpreted as welfare because the trader population and entry rules are synthetic and the market path is fixed.

\subsection{Archived numerical provenance}

The empirical values in this revision are copied from the archived publication outputs; no new empirical run was performed. Run identifiers and machine-level provenance remain in the repository manifests rather than in the paper narrative.

\section{Reproducibility and Data Release}
\label{app:reproducibility}

The empirical pipeline separates raw ingest, cleaning, feature construction, and deterministic engine evaluation. Every released artifact should record the code commit, input hashes, parameters, seeds where randomness enters, row counts, and output hashes.

The companion release plan includes:

\begin{itemize}
  \item \texttt{pmxt-stylized-facts-v1} for the structural diagnostics;
  \item \texttt{pmxt-counterfactual-replay-v1} for E2/E3 outputs;
  \item schemas, methodology, known limitations, processing ledger, descriptive statistics, and cryptographic manifests;
  \item reconstruction instructions for source data that cannot be redistributed;
  \item Zenodo archival versions and ForesightFlow dataset pages after final audit.
\end{itemize}

The revision itself performs no new calculation. It corrects the formal account identity, separates account shortfall from exchange bad debt, changes the interpretation of the funding rule, reconciles narrative denominators, and rewrites the empirical conclusions around the archived outputs.

\subsection{Mechanical-replay limitation}

Reproduction of $g_c(X^{\mathrm{obs}})$ does not validate the counterfactual market path $X(c)$. A complete deployment evaluation would require either observed operation of the product or a separately specified behavioral model with independently justified response functions.

\section{Notation and Terminology}
\label{app:notation}

\begin{table}[ht]
\centering
\caption{Canonical notation used in the revised paper.}
\label{tab:notation-revised}
\small
\begin{tabularx}{\textwidth}{L{2.1cm}Y}
\toprule
Symbol & Meaning \\
\midrule
$Y$ & Terminal payout of the YES claim, in $\{0,1\}$ \\
$\tau$ & Contract terminal timestamp used by the replay; not a claim that all oracle, protocol, and redemption clocks coincide \\
$p_t$ & Reference event-market price \\
$I_t$ & Engine reference index \\
$q_t$ & Derivative mark price \\
$x$ & Position size in claim units \\
$\mathsf{C}_0$ & Collateral posted at entry \\
$L$ & Premium-based leverage for the position \\
$E_t$ & Account equity \\
$S_\tau$ & Terminal account shortfall \\
$M_t$ & Margin requirement \\
$f_t$ & Per-unit funding transfer for one interval \\
$\mathsf{C}_0,\mathsf{C}_1,\mathsf{C}_2$ & E2 engine configurations \\
$M_0,\ldots,M_3$ & E3 resolution-zone mechanics \\
\bottomrule
\end{tabularx}
\end{table}

\paragraph{Central limit order book.} The abbreviation CLOB is expanded at first use. The phrase ``CLOB book'' is avoided because the word ``book'' is already contained in the abbreviation.

\paragraph{Account shortfall versus bad debt.} Account shortfall is negative trader-account equity after settlement. Exchange bad debt is the residual loss after liquidation, collateral recovery, guarantees, and insurance. The current replay identifies the former more directly than the latter.

\paragraph{Pre-specified versus pre-registered.} Pre-specified means the analysis plan was fixed before the run. Pre-registered is reserved for an independently timestamped registry.

\printbibliography[heading=bibintoc,title={References}]

@article{wolfers_zitzewitz_2004,
  author  = {Wolfers, Justin and Zitzewitz, Eric},
  title   = {Prediction Markets},
  journal = {Journal of Economic Perspectives},
  year    = {2004},
  volume  = {18},
  number  = {2},
  pages   = {107--126},
  doi     = {10.1257/0895330041371321}
}

@article{manski_2006,
  author  = {Manski, Charles F.},
  title   = {Interpreting the Predictions of Prediction Markets},
  journal = {Economics Letters},
  year    = {2006},
  volume  = {91},
  number  = {3},
  pages   = {425--429},
  doi     = {10.1016/j.econlet.2006.01.004}
}

@article{he_manela_schwert_2024_perpetuals,
  author  = {He, Songrun and Manela, Asaf and Schwert, Michael},
  title   = {Fundamentals of Perpetual Futures},
  journal = {arXiv preprint},
  year    = {2024},
  eprint  = {2212.06888},
  archivePrefix = {arXiv},
  url     = {https://arxiv.org/abs/2212.06888},
  note    = {Canonical mechanism paper for perpetual futures: derives no-arbitrage relationship between perpetual price, spot price, and funding rate. Useful for §2.2 and §6 baseline framing.}
}

@article{dai_li_yang_2025_arbitrage_perpetuals,
  author  = {Dai, Min and Li, Linfeng and Yang, Chen},
  title   = {Arbitrage in Perpetual Contracts},
  year    = {2025},
  note    = {SSRN working paper (abstract 5262988). Derives model-free no-arbitrage bounds for perpetual futures prices, examining the role of the funding clamping function. Relevant to §6 and §7.4 funding-rule discussion.}
}

@article{zhang_2026_funding_rate_mechanism,
  author  = {Zhang, Tianyang},
  title   = {Funding Rate Mechanism in Perpetual Futures},
  year    = {2026},
  month   = {February},
  doi     = {10.2139/ssrn.6185958},
  note    = {SSRN working paper (abstract 6185958). Studies funding rate as algorithmic feedback rule in continuous-time equilibrium with risk-constrained arbitrageurs and momentum speculators. Derives stability condition; analyzes funding caps, clamps, jump-and-crisis extension. Highly relevant to §7.4 funding rule discussion and §8.2 E2 evaluation.}
}

@article{dubach_2026_polymarket_anatomy,
  author  = {Dubach, Philipp D.},
  title   = {The Anatomy of a Decentralized Prediction Market: Microstructure Evidence from the Polymarket Order Book},
  year    = {2026},
  eprint  = {2604.24366},
  archivePrefix = {arXiv},
  url     = {https://arxiv.org/abs/2604.24366},
  note    = {Tick-level Polymarket WebSocket archive joined to on-chain trade record (30 billion events, 52 days, 600-market panel). Eight stylized facts: longshot spread premium, depth-concentration profile (geometric grid not top-of-book), null block-clock alignment, maker-wallet diversity, category-conditional effective spread, sub-50ms ingestion latency, low self-counterparty wash. Highly relevant baseline for §3.3, §5 stylized facts, §2.1 related work.}
}

@article{tsang_yang_2026_polymarket_anatomy,
  author  = {Tsang, Kwok Ping and Yang, Zichao},
  title   = {The Anatomy of Polymarket: Evidence from the 2024 Presidential Election},
  year    = {2026},
  month   = {March},
  eprint  = {2603.03136},
  archivePrefix = {arXiv},
  primaryClass = {econ.GN},
  url     = {https://arxiv.org/abs/2603.03136},
  note    = {Transaction-level on-chain analysis of Polymarket's 2024 US Presidential Election market. Develops volume decomposition (exchange-equivalent volume, net inflow, gross market activity) for blockchain-based prediction markets where minting/burning/conversion mix with conventional exchange. Documents three episodes (Biden withdrawal, September debate, October whales). Cited from §1 motivation as anchor for "substantial publicly reported volume around the 2024 US election cycle".}
}

@article{rahman_alchami_clark_2025_sok_depms,
  author  = {Rahman, Nahid and Al-Chami, Joseph and Clark, Jeremy},
  title   = {{SoK}: Market Microstructure for Decentralized Prediction Markets ({DePMs})},
  year    = {2025},
  month   = {October},
  eprint  = {2510.15612},
  archivePrefix = {arXiv},
  primaryClass = {cs.CE},
  url     = {https://arxiv.org/abs/2510.15612},
  note    = {Systematization of knowledge: history of DePM proposals from 2011 onward, modular workflow (underlying infrastructure, market topic, share structure, trading, resolution, settlement, archiving), design variants and trade-offs around decentralization, expressiveness, manipulation resistance. Argues Polymarket succeeded by deviating from earlier Augur/Truthcoin designs. Useful for §2.1 survey paragraph and §3.3 background on DePM landscape. Cross-listed q-fin.TR and cs.CR.}
}
\end{document}